\crefname{section}{section}{sections}
\crefname{subsection}{subsection}{subsections}
\Crefname{figure}{Figure}{Figures}
\newtheorem{theorem}{Theorem}
\begin{document}

\title{Computing hypergeometric functions rigorously}

%\author{Fredrik Johansson\thanks{LFANT, INRIA and Institut de Math\'{e}matiques de Bordeaux
%  (\protect\url{fredrik.johansson@gmail.com}}). }

\author{Fredrik Johansson\footnote{LFANT, INRIA and Institut de Math\'{e}matiques de Bordeaux (\protect\url{fredrik.johansson@gmail.com}).
  This research was partially funded by ERC Starting Grant ANTICS 278537.}}

\date{}

\maketitle

\begin{abstract}
We present an efficient implementation of hypergeometric functions
in arbitrary-precision interval arithmetic.
The functions ${}_0F_1$, ${}_1F_1$, ${}_2F_1$ and ${}_2F_0$
(or the Kummer $U$-function) are supported
for unrestricted complex parameters and argument,
and by extension,
we cover
exponential and trigonometric integrals, error functions, Fresnel integrals,
incomplete gamma and beta functions, Bessel functions, Airy functions,
Legendre functions, Jacobi polynomials,
complete elliptic integrals, and other special functions.
The output can be used directly for interval computations
or to generate provably correct floating-point approximations in any format.
Performance is competitive with earlier arbitrary-precision software, and
sometimes orders of magnitude faster.
We also partially cover the generalized hypergeometric function~${}_pF_q$
and computation of high-order parameter derivatives.
\end{abstract}

%\begin{keywords}
%hypergeometric functions, interval arithmetic, arbitrary-precision arithmetic, Bessel functions, orthogonal polynomials, automatic differentiation
%\end{keywords}

%\begin{AMS}
% 33F05, % Special functions: Numerical approximation and evaluation
% 33C20, % Generalized hypergeometric series, pFq
% 33C05, % Classical hypergeometric functions 2F1
% 33C10, % Bessel and Airy, cylinder functions, 0F1
% 33C15, % Confluent hypergeometric functions, Whittaker functions, 1F1
% 65G30, % Interval and finite arithmetic
% 65Y20, % Complexity and performance of numerical algorithms
% 65D20, % Computation of special functions, construction of tables
% 97N80, % Mathematical software, computer programs
% 33B15, % Gamma, beta and polygamma functions
% 33B20, % Incomplete beta and gamma functions (error functions, probability integral, Fresnel integrals)
% 33C45 % Orthogonal polynomials
% %33C55, % Spherical harmonics
% %33C75  % Elliptic integrals as hypergeometric functions
%\end{AMS}

\section{Introduction}
Naive numerical methods for evaluating special functions are prone
to give inaccurate results, particularly in the complex domain.
Ensuring just a few correct digits in double precision (53-bit) floating-point arithmetic
can be hard, and many scientific applications require even higher accuracy~\cite{bailey2015high}.
To give just a few examples, Bessel functions are needed with
quad precision (113-bit) accuracy in some
electromagnetics and hydrogeophysics simulations~\cite{6529388,Kuhlman2015},
with hundreds of digits in integer relation searches
such as for
Ising class integrals~\cite{bailey2006integrals},
and with thousands of digits in number theory
for computing with L-functions and modular forms~\cite{Booker2006}.

In this work, we address evaluation of
the confluent hypergeometric functions ${}_0F_1$, ${}_1F_1$ and~${}_2F_0$
(equivalently, the Kummer $U$-function)
and the Gauss hypergeometric function ${}_2F_1$ for complex parameters
and argument, to any accuracy and with rigorous error bounds.
Based on these very general functions, we are in turn able to compute
incomplete gamma and beta functions, Bessel functions, Legendre functions, and others,
covering a large portion of the special functions in standard references
such as Abramowitz and Stegun~\cite{AbramowitzStegun1964}
and the Digital Library of Mathematical Functions~\cite{NIST:DLMF}.

The implementation is part of the C library
Arb~\cite{Johansson2013arb}\footnote{\url{https://github.com/fredrik-johansson/arb/}},
which is free software (GNU LGPL).
The code is extensively tested and documented,
thread-safe, and runs on common 32-bit and 64-bit platforms.
Arb is a standard package in SageMath~\cite{sage},
which provides a partial high-level interface.
Partial Python and Julia bindings are also available.
Interfacing is easy from many other languages, including Fortran/C++.

\section{Hypergeometric functions}

A function $f(z) = \sum_{k=0}^{\infty} c(k) z^k$ is called hypergeometric
if the Taylor coefficients $c(k)$
form a \emph{hypergeometric sequence}, meaning that they satisfy a first-order
recurrence relation
$c(k+1) = R(k) c(k)$ where the term ratio $R(k)$
is a rational function of $k$.

The product (or quotient)
of two hypergeometric sequences with respective term ratios $R_1(k), R_2(k)$
is hypergeometric with term ratio $R_1(k) R_2(k)$ (or $R_1(k) / R_2(k)$).
Conversely, by factoring $R(k)$,
we can write any hypergeometric
sequence in closed form using powers $z^{k+1} = z \cdot z^k$
and gamma functions $\Gamma(a+k+1) = (a+k) \Gamma(a+k)$,
times a constant determined by the initial value of the recurrence.
The rising factorial
$\quad (a)_k = a (a+1) \cdots (a+k-1) = \Gamma(a+k) / \Gamma(a)$
is often used instead of the gamma function, depending on the initial value.
A standard notation for hypergeometric functions is offered by
the \emph{generalized hypergeometric function} (of order $(p, q)$)
\begin{equation}
{}_pF_q(a_1,\ldots,a_p, b_1,\ldots,b_q, z) = \sum_{k=0}^{\infty} T(k), \quad
T(k) = \frac{(a_1)_k \cdots (a_p)_k}{(b_1)_k \cdots (b_q)_k} \frac{z^k}{k!}
\label{eq:pfq}
\end{equation}
or the
\emph{regularized generalized hypergeometric function}
\begin{equation}
{}_p{\tilde F}_q(a_1,\ldots,a_p, b_1,\ldots,b_q, z) = 
\sum_{k=0}^{\infty} \frac{(a_1)_k \cdots (a_p)_k}{\Gamma(b_1+k) \cdots \Gamma(b_q+k)} \frac{z^k}{k!}
= \frac{{}_pF_q(\ldots)}{\Gamma(b_1) \cdots \Gamma(b_q)}
\label{eq:regpfq}
\end{equation}
where $a_i$ and $b_i$ are called (upper and lower) parameters, and
$z$ is called the argument
(see \cite[Chapter~16]{NIST:DLMF} and \cite{wolf}). Both \eqref{eq:pfq} and \eqref{eq:regpfq}
are solutions of the linear ODE
\begin{equation}
\left[z\prod_{n=1}^{p}\left(z\frac{d}{dz} + a_n\right) - z\frac{d}{dz}\prod_{n=1}^{q}\left(z\frac{d}{dz} + b_n-1\right) \right] f(z) = 0.
\label{eq:pfqode}
\end{equation}

\subsection{Analytic interpretation}

Some clarification is needed to interpret the formal
sums \eqref{eq:pfq} and \eqref{eq:regpfq} as analytic functions.
If any $a_i \in \mathbb{Z}_{\le 0}$,
the series terminates as ${}_pF_q = \sum_{k=0}^{-a_i} T(k)$, a
polynomial in~$z$ and a rational function of the other parameters.
If any $b_j \in \mathbb{Z}_{\le 0}$, ${}_pF_q$ is generally undefined
due to dividing by zero,
unless some $a_i \in \mathbb{Z}_{\le 0}$ with $a_i > b_j$ in which case
it is conventional to use the truncated series.

If some $a_i = b_j \in \mathbb{Z}_{\le 0}$ and the series does
not terminate earlier, \eqref{eq:pfq} is ambiguous. One
possible interpretation
is that we can cancel $a_i$ against~$b_j$ to get a
generalized hypergeometric function of order $(p-1, q-1)$.
Another interpretation is that $0 / 0 = 0$, terminating the series.
Some implementations are inconsistent and may use either interpretation.
For example, Mathematica evaluates
${}_1F_1(-n,-n,z) = e^z$ and ${}_1F_1(-1,-1,z) = 1+z$.
We do not need this case, and leave it undefined.
Ambiguity can be avoided by working with ${}_p{\tilde F}_q$, which
is well-defined for all values of the lower parameters,
and with explicitly truncated hypergeometric series
when needed.

With generic values of the parameters,
the rate of convergence of the series~\eqref{eq:pfq} or \eqref{eq:regpfq} depends on the sign of $p - q + 1$,
giving three distinct cases.

Case $p \le q$: the series converges for any $z$
and defines an entire function
with an irregular (exponential) singularity at $z = \infty$. The
trivial example is ${}_0F_0(z) = \exp(z)$.
The \emph{confluent hypergeometric
functions} ${}_0F_1(z)$ and ${}_1F_1(z)$
form exponential integrals, incomplete gamma functions, Bessel functions,
and related functions.

Case $p = q + 1$: the series converges for $|z| < 1$.
The function
is analytically continued
to the (cut) complex plane, with regular (algebraic or logarithmic)
singularities at $z = 1$ and $z = \infty$.
The \emph{principal branch} is defined
by placing a branch cut on the real interval $(1,+\infty)$. We define
the function value on the branch cut itself by continuity coming
from the lower half plane,
and the value at $z = 1$ as the limit from the left, if it exists.
The \emph{Gauss
hypergeometric function} ${}_2F_1$ is the most important example, forming
various orthogonal polynomials and integrals of algebraic functions.
The principal branch is chosen to be consistent with elementary evaluations such as
${}_2F_1(a,1;1;z) = {}_1F_0(z) = (1-z)^{-a} = \exp(-a \log(1-z))$
and $z \, {}_2F_1(1,1; 2; z) = -\log(1-z)$
with the usual convention that $\operatorname{Im}(\log(z)) \in (-\pi,\pi]$.

Case $p > q + 1$: the series only converges at $z = 0$, but
can be viewed as an asymptotic expansion
valid when $|z| \to 0$. 
Using resummation theory (Borel summation), it can be 
associated to an analytic function of $z$.

\subsection{Method of computation}

By \eqref{eq:pfqode}, hypergeometric functions are D-finite (holonomic), i.e.\ satisfy linear
ODEs with rational function coefficients.
There is a theory of ``effective analytic continuation''
for general D-finite functions~\cite{david1990computer,van2001fast,vdH:hol,Mezzarobba2010,Mezzarobba2011,MezzarobbaSalvy2010}.
Around any point $z \in \mathbb{C} \cup \{ \infty \}$,
one can construct a basis of solutions of the ODE
consisting of generalized formal power series
whose terms satisfy a linear recurrence relation
with rational function coefficients (the function ${}_pF_q$ arises in the special case where
the recurrence at $z = 0$ is hypergeometric,
that is, has order 1).
The expansions permit numerical evaluation of local solutions.
A D-finite function defined by an ODE and initial values
at an arbitrary point $z_0$
can be evaluated at any point $z_k$ by connecting local solutions
along a path $z_0 \to z_1 \to z_2 \ldots \to z_k$.

This is essentially the Taylor method for integrating ODEs numerically, but
D-finite functions are special.
First, the special form of the series expansions permits
rapid evaluation using reduced-complexity algorithms.
%Second, error bounds can be computed algorithmically.
Second, by use of generalized series expansions with singular prefactors, $z_0$ and $z_k$ can
be singular points (including~$\infty$),
or arbitrarily close to singular points, without essential loss of efficiency.

The general algorithm for D-finite functions is quite complicated and has
never been implemented fully with rigorous error bounds, even restricted
to hypergeometric functions. The most difficult problem is
to deal with irregular singular points, where the local series expansions
become asymptotic (divergent), and resummation theory is needed.
Even at regular points, it is difficult to perform all steps efficiently.
The state of the art is Mezzarobba's package~\cite{Mezzarobba2016},
which covers regular singular points.

In this work, we implement ${}_pF_q$ and ${}_p{\tilde F}_q$
rigorously using the direct series expansion at $z = 0$.
This is effective when $p \le q$ as long as $|z|$ is not too large,
and when $p = q + 1$ as long as $|z| \ll 1$.
Further, and importantly,
we provide a complete implementation of the cases $p \le 2, q \le 1$,
permitting efficient evaluation for any~$z$.
The significance of order $p \le 2, q \le 1$ is that many
second-order ODEs that arise in applications can be transformed to this case of~\eqref{eq:pfqode}.
We are able to cover evaluation
for any~$z$ due to the fact that the expansions of these ${}_pF_q$'s at $z = \infty$
(and $z = 1$ for~${}_2F_1$) can be expressed as finite linear combinations of other
${}_pF_q$'s with $p \le 2, q \le 1$
via explicit connection formulas. This includes the Borel regularized function
${}_2F_0$, which is related to the Kummer $U$-function.
Evaluation of ${}_2F_0$ near $z = 0$ (which is used for the
asymptotic expansions of ${}_0F_1$ and ${}_1F_1$ at $z = \infty$)
is possible thanks to explicit error bounds already available in the literature.
Analytic continuation via the hypergeometric ODE is used in one special case when computing ${}_2F_1$.

Using hypergeometric series as the main building block allows us to
cover a range of special functions efficiently
with reasonable effort.
For ${}_pF_q$'s of higher order, this simplified
approach is no longer possible. With some exceptions, expansions
of ${}_pF_q$ at $z \ne 0$ are not hypergeometric, and methods to
compute rigorous error bounds for asymptotic series
have not yet been developed into concrete algorithms.
Completing the picture for the higher ${}_pF_q$ functions
should be a goal for future research.

\subsection{Parameter derivatives and limits}

Differentiating ${}_pF_q$ with respect to~$z$ simply shifts parameters,
and batches of high-order $z$-derivatives are easy to compute using recurrence relations.
In general, derivatives with respect to parameters have no convenient closed forms.
We implement parameter derivatives
using truncated power series arithmetic (automatic differentiation). In other words,
to differentiate $f(a)$ up to order $n-1$, we compute $f(a + x)$ using
arithmetic in the ring $\mathbb{C}[[x]] / \langle x^{n} \rangle$.
This is generally more efficient than numerical differentiation, particularly for large $n$.
Since formulas involving analytic functions on $\mathbb{C}$ translate directly to
$\mathbb{C}[[x]]$, we can avoid
symbolically differentiated formulas, which often become unwieldy.

The most important use for parameter derivatives is to compute limits
with respect to parameters. Many connection formulas have removable
singularities at some parameter values. For example, if $f(a,z) = g(a,z) / \sin(\pi a)$ and
$g(a,z) = 0$ when $a \in \mathbb{Z}$, we
compute $\lim_{\varepsilon \to 0} f(a+\varepsilon)$ when $a \in \mathbb{Z}$
by evaluating
$g(a+\varepsilon) / \sin(\pi (a + \varepsilon))$
in $\mathbb{C}[[\varepsilon]] / \langle \varepsilon^2 \rangle$,
formally cancelling the zero constant terms in the power series division.

\subsection{Previous work}

Most published work on numerical methods for special functions uses heuristic error estimates.
Usually, only a subset of a function's domain is covered correctly.
Especially if only machine precision is used,
expanding this set is a hard problem that requires a patchwork of methods,
e.g.\ integral representations,
uniform asymptotic expansions, continued fractions, and recurrence relations.
A~good survey of methods for ${}_1F_1$
and ${}_2F_1$ has been done by Pearson et al.~\cite{pearson2009computation,pearson2014numerical}.
The general literature is too vast to summarize here; see~\cite{NIST:DLMF} for a bibliography.

Rigorous implementations until now have only supported a small set of special
functions on a restricted domain.
The arbitrary-precision libraries
MPFR, MPC and MPFI~\cite{Fousse2007,enge2011mpc,RevolRouillier2005}
provide elementary functions
and a few higher transcendental functions
of real variables, with guaranteed correct rounding. Other rigorous implementations
of restricted cases
include~\cite{du,du2002hypergeometric,yamamoto2005,colman2011validated}.
Computer algebra systems and arbitrary-precision libraries
such as  Mathematica, Maple, Maxima, Pari/GP, mpmath
and MPFUN~\cite{wolfmma,maplesoft,mpmath,PARI2,maxima,bailey2015mpfun2015}
support a wide range of special functions for complex variables,
but all use heuristic error estimates and sometimes produce incorrect output.
Performance can also be far from satisfactory.

Our contribution is to simultaneously support (1) a wide range of special functions,
(2) complex variables, (3) arbitrary precision, and (4) rigorous error bounds, with (5) high speed.
To achieve these goals,
we use interval arithmetic to automate most error bound
calculations, pay attention to asymptotics, and implement previously-overlooked optimizations.

The point of reference for speed is mainly other
arbitrary-precision software, since
the use of software arithmetic with variable precision and unlimited exponents
adds perhaps a factor 100 baseline
overhead compared to hardware floating-point arithmetic.
The goal is first of all to maintain reasonable speed even when
very high precision is required or when function arguments lie in numerically
difficult regions.
With further work, the overhead at low precision could also be reduced.

\section{Arbitrary-precision interval arithmetic}

Interval arithmetic provides a rigorous way to compute with real numbers~\cite{tucker2011validated}.
Error bounds are propagated automatically through the whole computation,
completely accounting for rounding errors as well as the effect of uncertainty in initial values.

Arb uses the midpoint-radius form of interval arithmetic (``ball arithmetic'')
with an arbitrary-precision floating-point midpoint and a low-precision floating-point
radius, as in
$\pi \in [3.14159265358979323846264338328 \pm 1.07 \cdot 10^{-30}]$.
This allows tracking error bounds without
significant overhead compared to floating-point arithmetic~\cite{vdH:ball}.
Of course, binary rather than decimal numbers are used internally.
We represent complex numbers in rectangular form as pairs of real balls;
in some situations, it would be better to use true complex balls (disks) with a single radius.

The drawback of interval arithmetic
is that error bounds must be overestimated.
Output intervals can be correct but
\emph{useless}, e.g.\ $\pi \in [0 \pm 10^{123}]$.
The combination of variable precision and interval arithmetic
allows increasing the precision until the output is useful~\cite{RevolRouillier2005}.
As in plain floating-point arithmetic, it helps to use algorithms
that are numerically stable, giving tighter enclosures,
but this is mainly a matter of efficiency (allowing lower precision to be used) and not of correctness.

\subsection{Adaptivity}

When the user asks for $P$ bits of precision,
Arb chooses internal evaluation parameters
(such as working precision and series cutoffs) to attempt to achieve $2^{-P}$
relative error, but stops
after doing $O(\operatorname{poly}(P))$ work, always returning a correct
interval within a predictable amount of time, where the output
may be useless if convergence is too slow,
cancellations too catastrophic, the input intervals too imprecise, etc.
A wrapper program or the end user will typically treat the
interval implementation as a black box and try
with, say, $P_1 = 1.1 P$ bits of precision, followed by precisions
$P_k = 2P_{k-1}$ if necessary until the output error bound has converged to
the desired tolerance of $2^{-P}$.
It is easy to abort and signal an error to the end user or
perhaps try a different implementation if
this fails after a reasonable number of steps.
For example, evaluating the incomplete gamma
function $\Gamma(20i, 10\pi i)$ at 32, 64 and 128 bits with Arb gives:
\begin{equation*}
\begin{array}{c}
{}[\pm 5.26 \cdot 10^{-13}] + [\pm 5.28 \cdot 10^{-13}]i \\
{}[4.0 \cdot 10^{-17} \pm 2.88 \cdot 10^{-19}] + [-1.855 \cdot 10^{-15} \pm 5.57 \cdot 10^{-19}]i \\
{}[4.01593625943 \cdot 10^{-17} \pm 2.58 \cdot 10^{-29}] + [-1.8554278420570 \cdot 10^{-15} \pm 6.04 \cdot 10^{-29}]i
\end{array}
\end{equation*}

Increasing the precision is only effective if the user can
provide exact input or intervals of width about $2^{-P}$.
We do not address the harder problem of bounding a function tightly
on a ``wide'' interval such as $[\pi,2\pi]$. Subdivision works,
but the worst-case cost for a resolution of $2^{-P}$ increases exponentially with $P$.
For some common special functions
($J_0(z), \operatorname{Ai}(z), \operatorname{erf}(z)$, etc.),
one solution is to evaluate the
function at the interval midpoint or endpoints and use
monotonicity or derivative bounds to enclose
the range on the interval.
We have used such methods for a few particular functions,
with good results, but do not treat the problem in general here.

We attempt to ensure heuristically that the output radius
converges to 0 when $P \to \infty$, but this is not formally guaranteed,
and some exceptions exist.
For example, when parameters $a \in \mathbb{Z}$ and $a \not \in \mathbb{Z}$
are handled separately, convergence might fail with input like $a = [3 \pm 2^{-P}]$.
Such limitations could be removed with further effort.

\subsection{Floating-point output}

It takes only a few lines of wrapper
code around the interval version of a function
to get floating-point output in any format, with prescribed guaranteed accuracy.
A C header file is available
to compute hypergeometric and other special functions
accurately for the C99 \texttt{double complex} type.%\footnote{\url{https://github.com/fredrik-johansson/arbcmath}}.

The correct rounding of the last bit
can normally be deduced by computing to sufficiently high precision.
However, if the true value of the function is an exactly representable floating-point
number (e.g.\ 3.25 = 3.250000\ldots) and the algorithm used to compute
it does not generate a zero-width interval, this
iteration fails to terminate (``the table-maker's dilemma'').
Exact points are easily handled in trivial cases
(for example, $z = 0$ is the only case for $e^z$).
However, hypergeometric functions can have nontrivial exact
points, and detecting them in general is a hard problem.

\subsection{Testing}

Every function has a test program that generates many (usually $10^3$ to $10^6$)
random inputs. Inputs
are distributed non-uniformly, mixing real and imaginary parts that are
exact, inexact, tiny, huge, integer, near-integer, etc., to trigger
corner cases with high probability. For each input, a function value
is computed in two different ways, by varying the precision, explicitly
choosing different internal algorithms, or applying a recurrence
relation to shift the input.
The two computed intervals, say $I_1, I_2$, must then satisfy
$I_1 \cap I_2 \ne \emptyset$, which is checked.
In our experience, this form of testing is very powerful,
as one can see by inserting deliberate bugs to verify
that the test code fails.
Other forms of testing
are also used.

\section{Direct evaluation of hypergeometric series}

\label{sect:hypdirsum}

The direct evaluation of ${}_pF_q$ via \eqref{eq:pfq}
involves three tasks: selecting the number of terms $N$,
computing the truncated sum $S(N) = \sum_{k=0}^{N-1} T(k)$, and
(unless the series terminates at this point) bounding the error
${}_pF_q - S(N)$, which is equal to $\sum_{k=N}^{\infty} T(k)$ when
the series converges.

In Arb, $N$ is first selected heuristically with
53-bit hardware arithmetic (with some care to avoid overflow and other issues).
In effect, for $P$-bit precision,
linear search is used to pick the first $N$ such that
$|T(N)| < \max_{n < N} |T(n)| / 2^P$.
If no such $N$ exists up to a precision-dependent limit $N \le N_{\text{max}}$,
the $N$ that minimizes $|T(N)|$ subject to $N \le N_{\text{max}}$ is chosen
($N_{\text{max}}$
allows us to compute a crude bounding interval for
${}_pF_q$ instead of getting stuck if the series converges too slowly).

Both $S(N)$ and $T(N)$ are subsequently computed
using interval arithmetic, and the value of $T(N)$ is used to compute a
rigorous bound for the tail.

\subsection{Tail bounds for convergent series}

\label{sect:tails}

If $N$ is so large that $|T(k+1)/T(k)|$ is small for all $k \ge N$,
then $T(N) (1 + \varepsilon)$ is a good
estimate of the error.
It is not hard to turn this observation into an effective bound.
Here, we define $b_{q+1} = 1$ so that
$T(k) = z^k \prod_{i=1}^p (a_i)_k / \prod_{i=1}^{q+1} (b_i)_k$
without the separate factorial.

\begin{theorem}
\label{thm:pfqbound}
With $T(k)$ as in \eqref{eq:pfq},
if $p \le q + 1$ and $\operatorname{Re}(b_i+N) > 0$ for all $b_i$,
then $\left|\sum_{k=N}^{\infty} T(N)\right| \le C |T(N)|$ where
\begin{equation*}
C = \begin{cases} \frac{1}{1-D} & D < 1 \\ \infty & D \ge 1, \end{cases}, \quad
D = |z| \, \prod_{i=1}^p \left(1 + \frac{|a_i-b_i|}{|b_i+N|}\right) \prod_{i=p+1}^{q+1} \frac{1}{|b_i + N|}.
\end{equation*}
\end{theorem}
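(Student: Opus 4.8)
The plan is to dominate the tail by a geometric series, using a single bound on the term ratio $T(k+1)/T(k)$ that is valid uniformly for all $k \ge N$. From the rising-factorial recurrence $(a)_{k+1} = (a+k)\,(a)_k$ (applied to each $(a_i)_k$ and, with $b_{q+1}=1$, to each $(b_i)_k$ including the factorial $(1)_k = k!$) one gets
\[
\frac{T(k+1)}{T(k)} \;=\; z\,\frac{\prod_{i=1}^{p}(a_i+k)}{\prod_{i=1}^{q+1}(b_i+k)},
\qquad\text{so}\qquad
\left|\frac{T(k+1)}{T(k)}\right| \;=\; |z|\,\frac{\prod_{i=1}^{p}|a_i+k|}{\prod_{i=1}^{q+1}|b_i+k|}.
\]
The hypothesis $p \le q+1$ is exactly what lets us pair each upper parameter $a_i$ with a lower parameter $b_i$ (for $i=1,\dots,p$) while leaving $b_{p+1},\dots,b_{q+1}$ unpaired in the denominator, matching the shape of $D$ in the statement. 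I would first note that $\operatorname{Re}(b_i+N)>0$ together with $k \ge N$ forces $\operatorname{Re}(b_i+k)>0$, so no denominator vanishes and every term of the tail is well defined.

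The one substantive ingredient is the monotonicity $|b_i+k| \ge |b_i+N|$ for $k \ge N$. Writing $b_i+N = x+iy$ with $x = \operatorname{Re}(b_i+N) > 0$, we have $b_i+k = \bigl(x+(k-N)\bigr)+iy$, and since $k-N\ge 0$ the real part only moves away from $0$, so $|b_i+k|^2 = (x+k-N)^2+y^2 \ge x^2+y^2 = |b_i+N|^2$. For the paired factors ($i \le p$) the triangle inequality gives $|a_i+k| = |(b_i+k)+(a_i-b_i)| \le |b_i+k|\bigl(1+|a_i-b_i|/|b_i+k|\bigr)$, hence $|a_i+k|/|b_i+k| \le 1+|a_i-b_i|/|b_i+k| \le 1+|a_i-b_i|/|b_i+N|$ using the monotonicity; for the unpaired factors ($p < i \le q+1$) the monotonicity gives $1/|b_i+k| \le 1/|b_i+N|$ directly. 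Multiplying these estimates over all $i$ and the factor $|z|$ yields $|T(k+1)/T(k)| \le D$ for every $k \ge N$, with $D$ precisely as stated.

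The rest is routine: a trivial induction gives $|T(N+j)| \le D^{\,j}\,|T(N)|$ for all $j \ge 0$, so
\[
\left|\sum_{k=N}^{\infty} T(k)\right| \;\le\; \sum_{j=0}^{\infty} |T(N+j)| \;\le\; |T(N)| \sum_{j=0}^{\infty} D^{\,j} \;=\; \frac{|T(N)|}{1-D}
\]
whenever $D<1$, which is the claimed bound with $C = 1/(1-D)$ (in particular the series converges, justifying the manipulations); when $D \ge 1$ the asserted bound $C=\infty$ is vacuous and there is nothing to prove. I do not expect a genuine obstacle: the only place that needs care is the monotonicity $|b_i+k|\ge|b_i+N|$, which is where the hypothesis $\operatorname{Re}(b_i+N)>0$ enters, and checking that the per-factor estimates are uniform in $k \ge N$ so that the telescoping into a geometric series is legitimate.
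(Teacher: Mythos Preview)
Your proof is correct and follows essentially the same approach as the paper: bound the term ratio $|T(k+1)/T(k)|$ uniformly by $D$ via the pairing $|a_i+k|/|b_i+k| \le 1+|a_i-b_i|/|b_i+N|$ and $1/|b_i+k|\le 1/|b_i+N|$, then sum a geometric series. You have actually made explicit the one point the paper glosses over, namely that $\operatorname{Re}(b_i+N)>0$ is exactly what guarantees the monotonicity $|b_i+k|\ge|b_i+N|$ for $k\ge N$.
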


\begin{proof}
Looking at
the ratio $T(k+1) / T(k)$ for $k \ge N$, we cancel out upper and lower parameters
$\frac{|a+k|}{|b+k|} = \left| 1 + \frac{a-b}{b+k} \right| \le 1 + \frac{|a-b|}{|b+N|}$
and bound remaining lower parameter factors as
$\left|\frac{1}{b+k}\right| \le \frac{1}{|b+N|}.$
Bounding the tail by a geometric series gives the result.
\end{proof}

The same principle is used to get tail bounds for
parameter derivatives of \eqref{eq:pfq},
i.e.\ bounds for each coefficient in
$\sum_{k=N}^{\infty} T(k) \in \mathbb{C}[[x]] / \langle x^n \rangle$
given $a_i, b_i, z \in \mathbb{C}[[x]] / \langle x^n \rangle$.
First, we fix some notation: if $A \in \mathbb{C}[[x]]$,
$A_{[k]}$ is the coefficient of $x^k$,
$A_{[m:n]}$ is the power series $\sum_{k=m}^{n-1} A_{[k]} x^k$,
$|A|$ denotes $\sum_{k=0}^{\infty} |A_{[k]}| x^k$ which can be viewed as an element of $\mathbb{R}_{\ge 0}[[x]]$,
and $A \le B$ signifies that $|A|_{[k]} \le |B|_{[k]}$ holds for all $k$.
Using
$(A B)_{[k]} = \sum_{j=0}^k A_{[j]} B_{[k-j]}$ and $(1 / B)_{[k]} = (1 / B_{[0]}) \sum_{j=1}^k -B_{[j]} (1/B)_{[k-j]}$, it is easy to check that
$|A + B| \le |A| + |B|$,
$|A B|  \le |A| |B|$ and
$|A / B| \le |A| / \mathcal{R}(B)$
where $\mathcal{R}(B) = |B_{[0]}| - |B_{[1:\infty]}|$.
\cref{thm:pfqbound} can now be restated for power series:
if $p \le q + 1$ and $\operatorname{Re}({b_i}_{[0]}\,+\,N) > 0$ for all $b_i$,
then $\left|\sum_{k=N}^{\infty} T(N)\right| \le C |T(N)|$ where
\begin{equation*}
C = \begin{cases} \frac{1}{\mathcal{R}(1-D)} & D_{[0]} < 1 \\ \infty & D_{[0]} \ge 1, \end{cases}, \quad
D = |z| \, \prod_{i=1}^p \left(1 + \frac{|a_i-b_i|}{\mathcal{R}(b_i+N)}\right) \prod_{i=p+1}^{q+1} \frac{1}{\mathcal{R}(b_i + N)}.
\end{equation*}

To bound sums and products of power series with (complex) interval coefficients,
we can use floating-point upper bounds with directed rounding for the absolute values instead
of performing interval arithmetic throughout.
For $\mathcal{R}(B)$, we must pick a lower bound for $|B_{[0]}|$ and upper bounds for
the coefficients of $|B_{[1:\infty]}|$.

\subsection{Summation algorithms}

Repeated use of the forward recurrence
$S(k+1) = T(k) S(k)$, $T(k+1) = R(k) T(k)$
where $R(k) = T(k+1) / T(k)$ with initial values
$S(0) = 0, T(0) = 1$ yields $S(N)$ and $T(N)$. This requires
$O(N)$ arithmetic
operations in $\mathbb{C}$, where we consider $p, q$ fixed,
or $\widetilde{O}(N^2)$ bit operations in the common situation where $N \sim P$,
$P$ being the precision.
When $N$ is large, Arb uses three different series evaluation algorithms
to reduce the complexity, depending on the output precision
and the bit lengths of the inputs.
Here we only give a brief overview of the methods; see~\cite{brent1976complexity,Borwein1987,ChudnovskyChudnovsky1988,Smith1989,Haible1998,vdH:hol,Ziegler2005,BostanGaudrySchost2007,Bernstein2008,mca,Johansson2014rectangular,Johansson2014thesis}
for background and theoretical analysis.

\emph{Binary splitting} (BS) is used at high precision when all parameters $a_i, b_i$
as well as the
argument $z$ have short binary representations, e.g. $a = 1+i, z = 3.25 = 13 \cdot 2^{-2}$. The
idea is to compute the matrix product $M(N-1) \cdots M(0)$ where
\begin{equation*}\begin{pmatrix} T(k+1) \\ S(k+1)\end{pmatrix} =
\begin{pmatrix} R(k) & 0 \\ 1 & 1 \end{pmatrix} \begin{pmatrix} T(k) \\ S(k)\end{pmatrix} \equiv M(k) \begin{pmatrix} T(k) \\ S(k)\end{pmatrix}
\end{equation*}
using a divide-and-conquer strategy,
and clearing denominators so that only a single final division is necessary.
BS reduces the asymptotic complexity of computing $S(N)$ exactly
(or to $\widetilde{O}(N)$ bits)
to $\widetilde{O}(N)$ bit operations.
BS is also used at low precision when $N$ is large,
since the $O(\log N)$ depth of operand dependencies
often makes it more numerically stable than the forward recurrence.

\emph{Rectangular splitting} (RS) is used at high precision when all parameters
have short binary representations but the argument has a long binary representation,
e.g.\ $z = \pi$ (approximated to high precision). The idea
is to write $\sum_{k=0}^{N-1} c_k z^k = (c_0 + c_1 z + \ldots + c_{m-1} z^{m-1})
+ z^m (c_m + c_{m+1} z + \ldots + c_{2m-1} z^{m-1}) + z^{2m} \ldots$ where $m \sim \sqrt{N}$,
reducing the number of multiplications where both factors depend
on $z$ to $O(\sqrt{N})$.
One extracts the common factors from the
coefficients $c_k$ so that all other multiplications and divisions only
involve short coefficients.
Strictly speaking, RS does not reduce the theoretical complexity
except perhaps by a factor $\log N$,
but the speedup in practice can grow to more than a factor 100,
and there is usually some speedup even at modest precision and with small $N$.
Another form of RS should be used when a single parameter
has a long binary representation~\cite{Johansson2014rectangular}; this
is a planned future addition that has not yet been implemented in Arb for general hypergeometric series,

\emph{Fast multipoint evaluation} (FME) is used at high precision when not
all parameters have short binary representations. Asymptotically,
it reduces the complexity to $\widetilde{O}(\sqrt{N})$ arithmetic operations.
Taking $m \sim \sqrt{N}$,
one computes $M(m+k-1) \cdots M(k)$ as a matrix with entries in $\mathbb{C}[k]$,
evaluates this matrix at $k = 0, m, 2m, \ldots, m(m-1)$,
and multiplies the evaluated matrices together.
FME relies on asymptotically fast FFT-based polynomial arithmetic, which
is available in Arb. Unlike BS and RS,
its high overhead and poor numerical stability~\cite{KohlerZiegler2008}
limits its use to very high precision.

Arb attempts to choose the best algorithm automatically.
Roughly speaking, BS and RS are used at precision above 256 bits, provided that
the argument and parameters are sufficiently short,
and FME is used above 1500 bits otherwise.
Due to the many variables involved, the automatic choice will not always
be optimal.

We note that Arb uses an optimized low-level implementation of RS
to compute elementary functions at precisions up to a few thousand bits~\cite{Johansson2015elementary}.
Although the elementary function series expansions are hypergeometric, that
code is separate from the implementation of generic hypergeometric
series discussed in this paper. Such optimizations could be
implemented for other ${}_pF_q$ instances.

\subsubsection{Parameter derivatives}

A benefit of using power series arithmetic instead of explicit formulas
for parameter derivatives is that complexity-reduction techniques
immediately apply.
Arb implements both BS and RS over $\mathbb{C}[[x]] / \langle x^n \rangle$,
with similar precision-based cutoffs as over $\mathbb{C}$.
FME is not yet implemented.

RS is currently only used for $n \le 2$, due to a minor technical limitation.
Binary splitting is always used when $n > 20$, as it allows taking advantage
of asymptotically fast polynomial multiplication; in particular, the
complexity is reduced to $\widetilde{O}(n)$ arithmetic operations in $\mathbb{C}$
when all inputs have low degree as polynomials in $x$.

\subsubsection{Regularization}

When computing ${}_p{\widetilde F}_q$, the recurrence relation
is the same, but the initial value
changes to $T(0) = (\Gamma(b_1) \cdots \Gamma(b_q))^{-1}$.
Above, we have implicitly assumed that no
$b_i \in \mathbb{Z}_{\le 0}$, so that
we never divide by zero in $R(k)$.
If some $-b_i \in \mathbb{Z}_{\ge 0}$ is part of the summation range
(assuming that the series does not terminate on $N < -b_i$),
the recurrence must be started at $T(k)$ at $k = 1 - b_i$ to avoid dividing by zero.
With arithmetic in $\mathbb{C}$, all the terms $T(0), \ldots, T(1-b_i)$ are then zero.
With arithmetic in $\mathbb{C}[[x]] / \langle x^n \rangle$, $n > 1$,
the use of the recurrence must be restarted whenever the coefficient of $x^0$ in
$b_i + k$ becomes zero. 
With $n > 1$, the terms $T(k)$ between such points need 
not be identically zero, so the recurrence may have to be started and
stopped several times.
For nonexact intervals that intersect $\mathbb{Z}_{\le 0}$,
the problematic terms should be skipped the same way.
Arb handles such cases, but there is room for optimization in the implementation.

When computing ${}_0{\tilde F}_1(c,z)$, ${}_1{\tilde F}_1(a,c,z)$ or ${}_2{\tilde F}_1(a,b,c,z)$
directly over $\mathbb{C}$ and $c \in \mathbb{Z}_{\le 0}$,
working in $\mathbb{C}[[x]] / \langle x^2 \rangle$ is avoided by using a direct formula, e.g.\
\begin{equation*}
{}_2{\tilde F}_1(a,b,-n,z) = \frac{(a)_{n+1} (b)_{n+1} z^{n+1}}{(n+1)!} \, {}_2F_1(a+n+1, b+n+1, n+2, z).
\end{equation*}

\section{The gamma function}

Computation of $\Gamma(s)$ is crucial for hypergeometric functions.
Arb contains optimized versions
of each of the functions $\Gamma(s)$, 
$1 / \Gamma(s)$ (avoiding division by zero when $s \in \mathbb{Z}_{\le 0}$),
$\log \Gamma(s)$ (with the correct branch structure), and
$\psi(s) = \Gamma'(s) / \Gamma(s)$,
for each of the domains
$s \in \mathbb{R}, \mathbb{C}, \mathbb{R}[[x]], \mathbb{C}[[x]]$.

Small integer and half-integer $s$ are handled directly.
A separate function is provided for $\Gamma(s)$
with $s \in \mathbb{Q}$, with optimizations
for denominators $3, 4, 6$, including use of
elliptic integral identities \cite{BorweinZucker1992}.
Euler's constant $\gamma = -\psi(1)$
is computed using the Brent-McMillan algorithm,
for which rigorous error bounds were
derived in~\cite{BrentJohansson2013bound}.

Many algorithms for $\Gamma(s), s \in \mathbb{C}$ have been proposed,
including \cite{Lanczos1964,Borwein1987,Spouge1994,schmelzer2007computing},
but in our experience, it is hard to beat the asymptotic Stirling series
\begin{equation}
\label{eq:stirling}
\log \Gamma(s) = \left(s-\frac{1}{2}\right)\log(s) - s +
      \frac{\log(2 \pi)}{2}
        + \sum_{k=1}^{N-1}  \frac{B_{2k}}{2k(2k-1)s^{2k-1}}
      + R(N,s)
\end{equation}
with argument reduction based on $\Gamma(s+r) = (s)_r \Gamma(s)$
and $\Gamma(s)\Gamma(1-s) = \pi / \sin(\pi s)$,
where the error term~$R(N,s)$ is bounded as in~\cite{Olver1997}.
Conveniently, \eqref{eq:stirling} is numerically
stable for large~$|s|$ and gives the correct branch structure for $\log \Gamma(s)$.

Fast evaluation of the rising factorial $(s)_r = s (s+1) \cdots (s+r-1)$
is important for the argument reduction when $|s|$ is small.
Arb uses the RS algorithm
described in~\cite{Johansson2014rectangular},
which builds on previous work by Smith~\cite{Smith2001}.
It also uses BS when~$s$ has a short binary representation,
or when computing derivatives of high order~\cite{Johansson2014thesis}.

The drawback of the Stirling series is that Bernoulli numbers are required.
An alternative for moderate $|s|$ is to use the approximation by a lower
gamma function $\Gamma(s) \approx \gamma(s,N) = s^{-1} N^a e^{-N} {}_1F_1(1,1+s,N)$
with a large $N$.
The methods for fast evaluation of hypergeometric series apply.
However, as noted in~\cite{Johansson2014rectangular}, this is usually
slower than the Stirling series if Bernoulli numbers are cached
for repeated evaluations.

\subsection{Bernoulli numbers and zeta constants}

Arb caches Bernoulli numbers, but generating them the first time
could be time-consuming if not optimized.
The best method in practice uses
$B_{2n} = (-1)^{n+1} 2(2n)! \zeta(2n) (2\pi)^{-2n}$
together with the von Staudt-Clausen theorem
to recover exact numerators.
Instead of using the Euler product for $\zeta(2n)$ as in~\cite{fillebrown1992faster},
the L-series is used directly.
The observation made in~\cite{bloemen}
is that if one computes a batch of Bernoulli numbers
in descending order $B_{2n}, B_{2n-2}, B_{2n-4}, \ldots$, the powers in
$\zeta(2n) = \sum_{k=1}^{\infty} k^{-2n}$
can be recycled, i.e.\ $k^{-2n} = k^2 \cdot k^{-(2n+2)}$.
On an Intel i5-4300U, Arb generates all the exact Bernoulli
numbers up to $B_{10^3}$ in 0.005 s,
$B_{10^4}$ in 1 s and $B_{10^5}$ in 10 min;
this is far better than
we have managed with other
methods, including methods with lower asymptotic complexity.

For $z \in \mathbb{Z}$, Arb computes $\Gamma(z+x) \in \mathbb{R}[[x]]$ via the Stirling series
when $|z| > P / 2$ and via
$\Gamma(1-x) = \exp\left(\gamma x + \sum_{k=2}^\infty \zeta(k) x^{k} / k \right)$
when $|z| \le P / 2$ (at $P$-bit precision).
The $\zeta(2n+1)$-constants are computed using the Euler product for large $n$,
and in general using the convergence acceleration method of~\cite{Borwein2000},
with BS at high precision when $n$ is small and with a term recurrence as in~\cite{mpfralg}
otherwise. For multi-evaluation, term-recycling is used \cite[Algorithm~4.7.1]{Johansson2014thesis};
much like in the case of Bernoulli numbers, this
has lower overhead than any other known method, including the algorithms proposed in~\cite{BorweinBradleyCrandall2000}.
The Stirling series with BS over $\mathbb{R}[[x]]$ could also be used
for multi-evaluation of $\zeta(2n+1)$-constants,
but this would only be competitive when computing thousands of constants
to over $10^5$ bits. As noted in \cite{Johansson2014thesis}, the Stirling series is better
for this purpose than
the $\Gamma(x) \approx \gamma(x,N)$ approximation
used in~\cite{Karatsuba1998,BorweinBradleyCrandall2000}.

We mention $\zeta(n)$-constants since they are of independent interest,
e.g.\ for convergence acceleration of series~\cite{flajolet1996zeta}.
In fact, such methods apply to slowly converging hypergeometric series~\cite{bogolubsky2006fast,skorokhodov2005method},
though we do not pursue this approach here.

\section{Confluent hypergeometric functions}

\label{sect:hyp1f1}

The conventional basis of solutions of
$z f''(z) + (b-z) f'(z) - a f(z) = 0$
consists of the confluent hypergeometric functions (or Kummer functions)
${}_1F_1(a,b,z)$ and $U(a,b,z)$, where $U(a,b,z) \sim z^{-a}, |z| \to \infty$.

\Cref{tab:specfunconfluent} gives a list of functions implemented in Arb
that are expressible via ${}_1F_1$ and $U$.
In this section, we outline the evaluation approach.
Other functions that could be implemented in the same way
include the Whittaker functions, parabolic cylinder functions,
Coulomb wave functions, spherical Bessel functions, Kelvin functions,
and the Dawson and Faddeeva functions. Indeed, the user can easily
compute these functions via the functions already available in~\cref{tab:specfunconfluent},
with the interval arithmetic automatically taking care of error bounds.

\begin{table}
\begin{center}
\begin{tabular}{l l}
Function & Notation \\ \hline
Confluent hypergeometric function \rule{0pt}{0.4cm} & ${}_1F_1{}(a;b;z), {}_1{\tilde F}_1{}(a;b;z)$, $U(a,b,z)$ \\
Confluent hypergeometric limit function & ${}_0F_1{}(b;z), {}_0{\tilde F}_1{}(b;z)$ \\
Bessel functions & $J_{\nu}(z)$, $Y_{\nu}(z)$, $I_{\nu}(z)$, $K_{\nu}(z)$ \\
Airy functions & $\operatorname{Ai}(z), \operatorname{Ai}'(z)$, $\operatorname{Bi}(z), \operatorname{Bi}'(z)$ \\
Error function & $\operatorname{erf}(z)$, $\operatorname{erfc}(z)$, $\operatorname{erfi}(z)$ \\
Fresnel integrals & $S(z)$, $C(z)$ \\
Incomplete gamma functions & $\Gamma(s,z), \gamma(s,z), P(s,z), Q(s,z), \gamma^{*}(s,z)$ \\
Generalized exponential integral & $E_{\nu}(z)$ \\
Exponential and logarithmic integral & $\operatorname{Ei}(z)$, $\operatorname{li}(z)$, $\operatorname{Li}(z)$ \\
Trigonometric integrals & $\operatorname{Si}(z)$, $\operatorname{Ci}(z)$, $\operatorname{Shi}(z)$, $\operatorname{Chi}(z)$ \\
Laguerre function (Laguerre polynomial) & $L^{\mu}_{\nu}(z)$ \\
Hermite function (Hermite polynomial) & $H_{\nu}(z)$ \\
\end{tabular}%}
\caption{Implemented variants and derived cases of confluent hypergeometric functions.}
\label{tab:specfunconfluent}
\end{center}
\end{table}

\subsection{Asymptotic expansion}

It turns out to be convenient to define the function
$U^{*}(a,b,z) = z^a U(a,b,z)$, which is asymptotic to~1 when $|z| \to \infty$.
We have the important asymptotic expansion
\begin{equation}
\label{eq:uasymp2f0}
U^{*}(a,b,z) = {}_2F_0\left(a, a-b+1, -\frac{1}{z}\right) = \sum_{k=0}^{N-1} \frac{(a)_k (a-b+1)_k}{k! (-z)^k} + \varepsilon_N(a,b,z),
\end{equation}
where $|\varepsilon_N(a,b,z)| \to 0$ for fixed $N,a,b$ when $|z| \to \infty$.
The ${}_2F_0$ series is divergent when $N \to \infty$ (unless
$a$ or $a-b+1 \in \mathbb{Z}_{\le 0}$ so that it reduces to a polynomial),
but it is natural to \emph{define} the function ${}_2F_0(a,b,z)$
for all $a,b,z \in \mathbb{C}$ in terms of $U$ via \eqref{eq:uasymp2f0}.
The choice between ${}_2F_0$ and $U$ (or $U^{*}$) is then just a matter of notation.
It is well known that this definition is equivalent
to Borel resummation of the ${}_2F_0$ series.

We use Olver's bound for the error term $|\varepsilon_N|$, implementing
the formulas almost verbatim from \cite[13.7]{NIST:DLMF}.
We do not reproduce the formulas here since lengthy case distinctions are needed.
As with convergent hypergeometric series, the error is bounded by $|T(N)|$ times
an easily computed function.
To choose $N$, the same heuristic is used as with convergent series.

We have not yet implemented parameter derivatives of the asymptotic series,
since the main use of parameter derivatives
is for computing limits in formulas involving non-asymptotic
series. Parameter derivatives of $\varepsilon_N$
could be bounded using the Cauchy integral formula.

\subsection{Connection formulas}

For all complex $a, b$ and all $z \ne 0$,
\begin{equation}
\label{eq:mconnect}
U(a,b,z)
    = \frac{\Gamma(1-b)}{\Gamma(a-b+1)} {}_1F_1(a,b,z)
    + \frac{\Gamma(b-1)}{\Gamma(a)} z^{1-b} {}_1F_1(a-b+1,2-b,z)
\end{equation}
\begin{equation}
\label{eq:uconnect}
{}_1{\tilde F}_1(a,b,z)
    = \frac{(-z)^{-a}}{\Gamma(b-a)} U^{*}(a,b,z)
    + \frac{z^{a-b} e^z}{\Gamma(a)} U^{*}(b-a,b,-z)
\end{equation}
with the understanding that principal branches are used everywhere and
$U(a,b,z) = \lim_{\varepsilon \to 0} U(a,b+\varepsilon,z)$
in \eqref{eq:mconnect} when $b \in \mathbb{Z}$.
Formula \eqref{eq:mconnect}, which allows
computing $U$ when $|z|$ is too small to use the asymptotic series,
is \cite[13.2.42]{NIST:DLMF}.
Formula \eqref{eq:uconnect}, which in effect gives us the asymptotic
expansion for ${}_1F_1$, can be derived from
the connection formula between ${}_1F_1$ and $U$
given in \cite[13.2.41]{NIST:DLMF}.
The advantage of using $U^{*}$
and the form~\eqref{eq:uconnect} instead of $U$ is that it
behaves continuously in interval arithmetic when $z$ straddles the real axis.
The discrete jumps that occur in~\eqref{eq:uconnect} when crossing
branch cuts on the right-hand side
only contribute by an exponentially small amount:
when $|z|$ is large enough for the asymptotic expansion to be used,
$z^{a-b}$ is continuous where $e^z$ dominates and
$(-z)^{-a}$ is continuous where $e^z$ is negligible.

\subsection{Algorithm selection}

Let $P$ be the target precision in bits. To support unrestricted
evaluation of ${}_1F_1$ or $U$,
the convergent series should be used for fixed $z$ when $P \to \infty$,
and the asymptotic series should be used for fixed $P$ when $|z| \to \infty$.
Assuming that $|a|,|b| \ll P, |z|$, the
error term in the asymptotic series is smallest approximately
when $N = |z|$, and the magnitude of the error
then is approximately $e^{-N} = e^{-|z|}$.
In other words, the asymptotic series can give
up to $|z| / \log(2)$ bits of accuracy.
Accordingly, to compute either ${}_1F_1$ or $U$, there are four main cases:

\begin{itemize}
\item For ${}_1F_1$ when $|z| < P \log(2)$, use the ${}_1F_1$ series directly.
\item For ${}_1F_1$ when $|z| > P \log(2)$, use \eqref{eq:uconnect} and evaluate two ${}_2F_0$ series (with error bounds for each).
\item For $U$ or $U^{*}$ when $|z| > P \log(2)$, use the ${}_2F_0$ series directly.
\item For $U$ or $U^{*}$ when $|z| < P \log(2)$, use \eqref{eq:mconnect} and evaluate two ${}_1F_1$ series.
If $b \in \mathbb{Z}$, compute $\lim_{\varepsilon \to 0} U(a,b+\varepsilon,z)$
by substituting $b \to b+\varepsilon \in \mathbb{C}[[\varepsilon]] / \langle \varepsilon^2 \rangle$
and evaluating the right hand side of~\eqref{eq:mconnect} using power series arithmetic.
\end{itemize}

The cutoff based on $|z|$ and $P$ is correct asymptotically,
but somewhat naive since it
ignores cancellation in the ${}_1F_1$
series and in the connection formula \eqref{eq:mconnect}.
In the transition region $|z| \sim P$, all significant bits may be lost.
The algorithm selection can be optimized by estimating the amount of
cancellation with either formula, and selecting the one that should
give the highest final accuracy.
For example, assuming that both $a,b$ are small, ${}_1F_1(a,b,z) \approx {}_1F_1(1,1,z) = e^z$
while the terms in the ${}_1F_1$ series grow to about $e^{|z|}$,
so $(|z| - \operatorname{Re}(z)) / \log(2)$ bits are lost to cancellation,
while no cancellation occurs in the ${}_2F_0$ series.
A more advanced scheme should take into account $a$ and~$b$.
The algorithm selection in Arb generally uses the simplified
asymptotic estimate, although cancellation estimation has been implemented for
a few specific cases. This is an important place for future optimization.

In general, the Kummer transformation ${}_1F_1(a,b,z) = e^z {}_1F_1(b-a,b,-z)$
is used to compute ${}_1F_1$ for $\operatorname{Re}(z) < 0$, so
that worst-case cancellation occurs around
the oscillatory region $z = iy$, $y \in \mathbb{R}$.
An interesting potential optimization would be to use methods
such as \cite{chevillard2013multiple} to reduce cancellation there also.

\subsection{Bessel functions}

Bessel functions are computed via the ${}_0F_1$ series for small $|z|$
and via $U^{*}$ for large $|z|$.
The formula ${}_0F_1(b,z^2) = e^{-2z} {}_1F_1(b-\tfrac{1}{2},2b-1,4z)$
together with \eqref{eq:uconnect} gives the asymptotic expansions of
\begin{equation*}
J_{\nu}(z), I_{\nu}(z) = \frac{1}{\Gamma(\nu+1)} \left(\frac{z}{2}\right)^{\nu}
             {}_0F_1\left(\nu+1, \mp \frac{z^2}{4}\right).
\end{equation*}
To compute ${}_0F_1(b,z)$ itself when $|z|$ is large, $J$ and $I$ are used
according to the sign of midpoint of $\operatorname{Re}(z)$,
to avoid evaluating square roots on the branch cut. For $K_{\nu}(z)$,
we use
$K_{\nu}(z) = \left(2z/\pi\right)^{-1/2} e^{-z}
    U^{*}(\nu+\tfrac{1}{2}, 2\nu+1, 2z)$
when $|z|$ is large, and
\begin{equation}
\label{eq:besselk}
K_{\nu}(z) = \frac{1}{2} \frac{\pi}{\sin(\pi \nu)} \left[
            \left(\frac{z}{2}\right)^{-\nu}
                {}_0{\widetilde F}_1\left(1-\nu, \frac{z^2}{4}\right)
             -
             \left(\frac{z}{2}\right)^{\nu}
                 {}_0{\widetilde F}_1\left(1+\nu, \frac{z^2}{4}\right)
            \right]
\end{equation}
otherwise, with a limit computation when $\nu \in \mathbb{Z}$.
Note that it would be a mistake to use \eqref{eq:besselk} for all $z$,
not only because it requires evaluating four asymptotic series instead of one,
but because it would lead to exponentially large cancellation when $z \to +\infty$.

\subsection{Other functions}

The other functions in \cref{tab:specfunconfluent} are generally computed via
${}_1F_1$, ${}_1F_2$, ${}_2F_2$ or ${}_2F_3$ series for small $|z|$
(in all cases, ${}_1F_1$ could be used, but the alternative series
are better),
and via $U$ or $U^{*}$ for large $|z|$.
Airy functions, which are related to Bessel functions with parameter
$\nu = \pm 1/3$, use a separate implementation that does not rely
on the generic code for ${}_0F_1$ and $U^{*}$.
This is done as an optimization since the generic code for hypergeometric series
currently does not have an interface to input rational parameters with non-power-of-two denominators
exactly.

In all cases, care is taken to use formulas that avoid cancellation asymptotically when $|z| \to \infty$,
but cancellation can occur in the transition region $|z| \sim P$.
Currently, the functions
$\operatorname{erf}, \operatorname{erfc}, S, C, \operatorname{Ai}, \operatorname{Ai}', \operatorname{Bi}, \operatorname{Bi}'$
automatically evaluate the function at the midpoint of the input interval
and compute a near-optimal error bound based on derivatives,
automatically increasing the internal working precision to compensate
for cancellation. This could also be done for other functions in the future,
most importantly for exponential integrals and Bessel functions of small order.

\section{The Gauss hypergeometric function}

\label{sect:hyp2f1}

\begin{table}
\begin{center}
\begin{tabular}{l l}
Function & Notation \\ \hline
Hypergeometric function \rule{0pt}{0.4cm} & ${}_2F_1{}(a,b;c;z), {}_2{\tilde F}_1{}(a,b;c;z)$ \\
Chebyshev functions & $T_{\nu}(z), U_{\nu}(z)$ \\
Jacobi function & $P_{\nu}^{\alpha,\beta}(z)$ \\
Gegenbauer (ultraspherical) function & $C_{\nu}^{\mu}(z)$ \\
Legendre functions & $P_{\nu}^{\mu}(z), Q_{\nu}^{\mu}(z), \mathcal{P}_{\nu}^{\mu}(z), \mathcal{Q}_{\nu}^{\mu}(z)$ \\
Spherical harmonics & $Y_n^m(\theta,\varphi)$ \\
Incomplete beta function & $B(a,b;z), I(a,b;z)$ \\
Complete elliptic integrals & $K(z), E(z)$ \\
\end{tabular}%}
\caption{Implemented variants and derived cases of the Gauss hypergeometric function.}
\label{tab:specfungauss}
\end{center}
\end{table}

The function~${}_2F_1$ is implemented for general parameters, together
with various special cases shown in \cref{tab:specfungauss}.
For Legendre functions, two different branch cut variants are implemented.
We list $T_{\nu}(z), U_{\nu}(z), K(z), E(z)$
for completeness; Chebyshev functions are generally computed
using trigonometric formulas (or binary exponentiation
when $\nu \in \mathbb{Z}$), and complete elliptic integrals
are computed using arithmetic-geometric mean iteration.
Some other functions in \cref{tab:specfungauss}
also use direct recurrences to improve speed or
numerical stability in special cases, with the ${}_2F_1$ representation
used as a general fallback.

\subsection{Connection formulas}

The function ${}_2F_1$ can be rewritten
using the Euler and Pfaff transformations
\begin{align}
\label{eq:hyp2f1transf1}
\begin{split}
{}_2F_1 (a,b,c,z) &= (1-z)^{c-a-b} {}_2F_1 (c-a, c-b, c, z) \\
                  &= (1-z)^{-a} \,{}_2F_1 \!\left (a, c-b, c, \frac{z}{z-1} \right) \\
                  &= (1-z)^{-b} \,{}_2F_1 \!\left (c-a, b, c, \frac{z}{z-1} \right).
\end{split}
\end{align}
It can also be written as a linear combination of two ${}_2F_1$'s
of argument $1/z$, $1/(1-z)$, $1-z$ or $1-1/z$ (see \cite[15.8]{NIST:DLMF}).
For example, the $1/z$ transformation reads
\begin{align}
\label{eq:2f1inv}
\begin{split}
\frac{\sin(\pi(b-a))}{\pi} {}_2{\tilde F}_1(a,b,c,z) & =
\frac{(-z)^{-a}}{\Gamma(b) \Gamma(c-a)} \,{}_2{\tilde F}_1\!\left(a,a-c+1,a-b+1,\frac{1}{z}\right) \\
& -
\frac{(-z)^{-b}}{\Gamma(a) \Gamma(c-b)} \,{}_2{\tilde F}_1\!\left(b,b-c+1,b-a+1,\frac{1}{z}\right).
\end{split}
\end{align}
The first step when computing ${}_2F_1$ is to check whether
the original series is terminating, or whether one of \eqref{eq:hyp2f1transf1}
results in a terminating series.
Otherwise, we pick the linear fractional transformation among
\begin{equation*}
z, \quad \frac{z}{z-1}, \quad \frac{1}{z}, \quad \frac{1}{1-z}, \quad 1-z, \quad 1-\frac{1}{z}
\end{equation*}
that results in the argument of smallest absolute value,
and thus the most rapid convergence of the hypergeometric series.
In Arb, experimentally-determined tuning factors are used in this
selection step to account for the fact that
the first two transformations require half as many function evaluations.
The coverage of the complex plane is illustrated in \cref{fig:domains2f1}.
This strategy is effective for all complex $z$ except
near $e^{\pm \pi i / 3}$, which we handle below.

\begin{figure}
\centering
\includegraphics[scale=0.3]{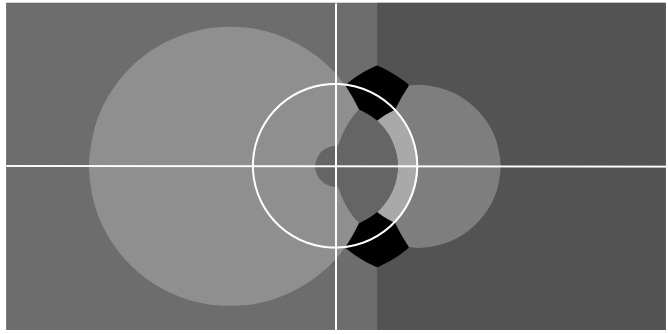}
\caption{Different shades correspond to different fractional transformations chosen
depending on the location of $z$ in the complex plane. The coordinate axes
and the unit circle are drawn in white. Black indicates the corner cases where
no transformation is effective.}
\label{fig:domains2f1}
\end{figure}

\subsection{Parameter limits}

The $1/z$ and $1/(1-z)$ transformations involve a division by
$\sin(\pi(b-a))$, and the $1-z$ and $1-1/z$ transformations
involve a division by $\sin(\pi(c-a-b))$.
Therefore, when $b-a$ or $c-a-b$ respectively is an integer, we
compute $\lim_{\varepsilon \to 0} {}_2F_1(a + \varepsilon,b,c,z)$
using
$\mathbb{C}[[\varepsilon]] / \langle \varepsilon^2 \rangle$ arithmetic.

The limit computation can only be done automatically when $b-a$ (or $c-a-b$)
is an exact integer.
If, for example, $a, b$ are intervals representing $\pi, \pi+1$ and
$b-a$ thus is a nonexact interval containing~1, the output would not be finite.
To solve this problem, Arb allows the user to pass
boolean flags as input to the ${}_2F_1$
function, indicating that $b-a$ or $c-a-b$ represent exact integers
despite $a,b,c$ being inexact.

\subsection{Corner case}

When $z$ is near $e^{\pm \pi i / 3}$, we use analytic continuation.
The function $f(z) = {}_2F_1(a,b,c,z_0+z)$ satisfies
\begin{equation}
%P_2(z) f''(z) + P_1 f'(z) + P_0 f(z) = 0
f''(z) = -\frac{P_1(z)}{P_2(z)} f'(z) - \frac{P_0(z)}{P_2(z)} f(z).
\label{eq:fdiffeq}
\end{equation}
with
$P_2(z) = (z+z_0)(z+z_0-1)$,
$P_1(z) = (a+b+1)(z+z_0) - c$,
$P_0(z) = ab$.
%\begin{align*}
%P_2(z) &= (z+z_0)(z+z_0-1) \\
%P_1(z) &= (a+b+1)(z+z_0) - c \\
%P_0(z) &= ab.
%\end{align*}
It follows from \eqref{eq:fdiffeq} that the coefficients
in the Taylor series $f(z) = \sum_{k=0}^{\infty} f_{[k]} z^k$ satisfy
the second-order recurrence equation
\begin{equation}
\label{eq:rrec}
R_2(k) f_{[k+2]} + R_1(k) f_{[k+1]} + R_0(k) f_{[k]} = 0
\end{equation}
with
$R_2(k) = (k+1)(k+2)(z_0-1)z_0$,
$R_1(k) = (k+1) (2k+a+b+1) z_0 - (k+1)(k+c)$,
$R_0(k) = (a+k)(b+k)$.

Knowing $f(0), f'(0)$, we can compute $f(z), f'(z)$ if $z$ is sufficiently small
and $z_0 \not\in \{0,1\}$,
thanks to \eqref{eq:rrec} and the truncation bound given below.
We use the usual ${}_2F_1$ series at the origin, following by
two analytic continuation steps
\begin{equation*}
0 \quad \to \quad 0.375 \pm 0.625i \quad \to \quad 0.5 \pm 0.8125i \quad \to \quad z.
\end{equation*}
This was found to be a good choice based on experiments.
There is a tradeoff involved in choosing the evaluation points,
since smaller steps yield faster local convergence.

Currently, BS and other reduced-complexity
methods are not implemented for this Taylor series in Arb.
At high precision, assuming that the parameters $a,b,c$
have short representations,
it would be better to choose a finer path in which
$z_k$ approaches the final evaluation point as $2^{-k}$,
with BS to evaluate the Taylor series (this is known
as the bit-burst method).

Though only used by default in the corner case, the user can
invoke analytic continuation along an arbitrary path; for example,
by crossing $(1,\infty)$, a non-principal branch can be computed.

A simple bound for the coefficients in the Taylor series is obtained
using the Cauchy-Kovalevskaya majorant method,
following~\cite{van2001fast}.

\begin{theorem}
If $f(z) = \sum_{k=0}^{\infty} f_{[k]} z^k$ satisfies \eqref{eq:fdiffeq} and $z_0 \not \in \{0,1\}$, then
for all $k \ge 0$,
$|f_{[k]}| \le A \binom{N+k}{k} \nu^k$
provided that
\begin{equation}
\label{eq:2f1boundnu} \nu \ge \max\left(\frac{1}{|z_0-1|}, \frac{1}{|z_0|}\right), \quad M_0 \ge 2 \nu |ab|, \quad M_1 \ge \nu (|a+b+1| + 2|c|),
\end{equation}
\begin{align}
\label{eq:2f1boundN} N &\ge \frac{\max(\sqrt{2 M_0}, 2 M_1)}{\nu}, \\
\label{eq:2f1boundA} A &\ge \max\left(|f_{[0]}|, \frac{|f_{[1]}|}{\nu (N+1)}\right).
\end{align}
\end{theorem}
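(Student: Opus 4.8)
The plan is to verify that the bound $|f_{[k]}| \le A\binom{N+k}{k}\nu^k$ propagates through the recurrence~\eqref{eq:rrec} by induction on $k$. First I would recast the recurrence~\eqref{eq:rrec} in the normalized form $f_{[k+2]} = -\bigl(R_1(k)/R_2(k)\bigr) f_{[k+1]} - \bigl(R_0(k)/R_2(k)\bigr) f_{[k]}$, which is legitimate since $z_0 \notin \{0,1\}$ makes $R_2(k) = (k+1)(k+2)(z_0-1)z_0$ nonzero. Taking absolute values and using the triangle inequality gives
\begin{equation*}
|f_{[k+2]}| \le \frac{|R_1(k)|}{|R_2(k)|}\,|f_{[k+1]}| + \frac{|R_0(k)|}{|R_2(k)|}\,|f_{[k]}|.
\end{equation*}
The base cases $k=0,1$ are exactly the two clauses of~\eqref{eq:2f1boundA}: $|f_{[0]}| \le A = A\binom{N}{0}\nu^0$ and $|f_{[1]}| \le A\nu(N+1) = A\binom{N+1}{1}\nu^1$.

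For the inductive step, I would substitute the hypotheses $|f_{[k]}| \le A\binom{N+k}{k}\nu^k$ and $|f_{[k+1]}| \le A\binom{N+k+1}{k+1}\nu^{k+1}$ into the displayed inequality and show the right-hand side is at most $A\binom{N+k+2}{k+2}\nu^{k+2}$. Dividing through by $A\binom{N+k+2}{k+2}\nu^{k+2}$, and using the ratio identities $\binom{N+k+1}{k+1}/\binom{N+k+2}{k+2} = (k+2)/(N+k+2)$ and $\binom{N+k}{k}/\binom{N+k+2}{k+2} = (k+1)(k+2)/\bigl((N+k+1)(N+k+2)\bigr)$, it suffices to prove
\begin{equation*}
\frac{|R_1(k)|}{|R_2(k)|\,\nu}\cdot\frac{k+2}{N+k+2} + \frac{|R_0(k)|}{|R_2(k)|\,\nu^2}\cdot\frac{(k+1)(k+2)}{(N+k+1)(N+k+2)} \le 1.
\end{equation*}
Now I would bound the two terms separately. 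Since $|R_2(k)| = (k+1)(k+2)|z_0-1||z_0|$ and $\nu \ge \max(1/|z_0-1|,1/|z_0|)$, we get $|R_2(k)|\nu^2 \ge (k+1)(k+2)$, so the second term is at most $|R_0(k)|/\bigl((N+k+1)(N+k+2)\bigr) \le |ab|/N^2$; using $N^2 \ge 2M_0 \ge 4\nu|ab| \ge |ab|$ (in fact the factor $M_0 \ge 2\nu|ab|$ and $N \ge \sqrt{2M_0}/\nu$ give $N^2\nu^2 \ge 2M_0 \ge 4\nu|ab|$, hence $N^2 \ge 4|ab|/\nu \ge$ the needed quantity after the $(N+k+1)(N+k+2)\ge N^2$ step), this term is bounded by $1/2$. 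For the first term, $|R_1(k)| = |(k+1)(2k+a+b+1)z_0 - (k+1)(k+c)| \le (k+1)\bigl((2k+|a+b+1|)|z_0| + k + |c|\bigr)$; dividing by $|R_2(k)| = (k+1)(k+2)|z_0-1||z_0|$ and multiplying by $(k+2)/\nu$ and $1/(N+k+2)$, the $(k+1)(k+2)$ cancel and one is left with roughly $\bigl((2k+|a+b+1|)|z_0| + k+|c|\bigr)\big/\bigl(\nu|z_0-1||z_0|(N+k+2)\bigr)$. Using $\nu|z_0-1|\ge 1$ and $\nu|z_0|\ge 1$ to absorb the denominators, this is at most $\bigl(2k + |a+b+1| + (k+|c|)/|z_0|\bigr)\big/\bigl(\nu(N+k+2)\bigr)$, and the constraint $M_1 \ge \nu(|a+b+1|+2|c|)$ together with $N \ge 2M_1/\nu$ is exactly what is engineered to make the $k$-independent and $k$-linear parts combine to give a bound of $1/2$; the linear-in-$k$ contributions are dominated because $N+k+2 \ge k$ and the coefficient of $k$ in the numerator is controlled by $\nu$ via the $\nu \ge 1/|z_0|$ and $\nu\ge 1/|z_0-1|$ conditions.

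The main obstacle is the bookkeeping in the first-term estimate: one must carefully split $|R_1(k)|$ into its $k$-independent, $k$-linear parts, track how each divides against the $(k+1)(k+2)$ in $|R_2(k)|$ and the $(k+2)/(N+k+2)$ combinatorial factor, and confirm that the conditions~\eqref{eq:2f1boundnu}--\eqref{eq:2f1boundN} were chosen with enough slack (the $\max(\sqrt{2M_0},2M_1)$ and the factors $2$ in $M_0,M_1$) so that the two terms sum to at most $1$ uniformly in $k \ge 0$. This is the majorant-series calculation of~\cite{van2001fast} specialized to the hypergeometric operator, so while it is routine in spirit, getting every inequality to point the right way with the stated (deliberately generous) constants is where the care lies. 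Once that inequality is established, the induction closes and the theorem follows.
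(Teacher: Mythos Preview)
Your approach is genuinely different from the paper's, and unfortunately it has a fatal gap. The paper does \emph{not} induct on the three-term recurrence~\eqref{eq:rrec}; it applies the Cauchy--Kovalevskaya majorant method to the normalized ODE $f''=(-P_1/P_2)f'+(-P_0/P_2)f$, bounds the Taylor coefficients of $-P_i/P_2$ by $M_i\nu^k$, passes to the majorant equation $h''=\tfrac{(N+1)\nu}{2(1-\nu z)}h'+\tfrac{N(N+1)\nu^2}{2(1-\nu z)^2}h$, and reads off the closed-form solution $h(z)=A(1-\nu z)^{-N}$.

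Your induction fails at the very first estimate for the second term. You write that it is at most $|R_0(k)|/\bigl((N+k+1)(N+k+2)\bigr)\le|ab|/N^2$, but $R_0(k)=(a+k)(b+k)$, not $ab$; the quantity $|R_0(k)|$ grows like $k^2$. With the correct $R_0(k)$, the ratio $|(a+k)(b+k)|/\bigl((N+k+1)(N+k+2)\bigr)$ tends to~$1$ as $k\to\infty$ (after the further loss $\alpha\beta/\nu^2\le1$), so it is certainly not bounded by~$1/2$ uniformly.

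Worse, the defect cannot be repaired by redistributing between the two terms. Take $z_0=2$, so $\nu=1$ and $|R_2(k)|=2(k+1)(k+2)$; with (say) $a=b=1,\ c=2$ one has $R_1(k)=(k+1)(3k+4)$ and $R_0(k)=(k+1)^2$. Your first term equals $(3k+4)/\bigl(2(N+k+2)\bigr)\to 3/2$, your second term equals $(k+1)^2/\bigl(2(N+k+1)(N+k+2)\bigr)\to 1/2$, and their sum tends to~$2$, not~$1$. The three-term recurrence with naive absolute values on $R_0,R_1,R_2$ simply does not close.

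The reason the majorant method succeeds where this fails is structural. Passing from $P_2f''+P_1f'+P_0f=0$ to the three-term recurrence and then taking absolute values destroys a cancellation: in the recurrence for the true majorant $h=A(1-\nu z)^{-N}$, the coefficient multiplying $h_{[k]}$ is eventually \emph{negative} (of order $-\nu^2k^2$), and that minus sign is exactly what offsets the $k^2$ growth. The paper's method keeps this information by working with the normalized ODE, where the coefficient functions $-P_i/P_2$ have bounded power-series coefficients and the induction becomes a convolution over all earlier $f_{[j]}$, not just the previous two. If you want an inductive proof, it must be on that convolutional form, not on~\eqref{eq:rrec} with termwise absolute values.
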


\begin{proof}
Note that
\begin{equation*}
\frac{1}{P_2(z)} = \frac{1}{z+z_0-1} - \frac{1}{z+z_0} =
\sum_{k=0}^{\infty} \left(\frac{1}{(z_0-1)^{k+1}} - \frac{1}{z_0^{k+1}}\right) (-z)^k.
\end{equation*}
Accordingly, with $M_0, M_1, \nu$ as in \eqref{eq:2f1boundnu},
\begin{equation*}
\left|\left(-\frac{P_i(z)}{P_2(z)}\right)_{[k]}\right| \le M_i \nu^k = \left(\frac{M_i}{1-\nu z}\right)_{[k]},
\quad i = 0,1.
\end{equation*}
In other words, \eqref{eq:fdiffeq} is majorized by
$g''(z) = \frac{M_1}{1-\nu z} g'(z) + \frac{M_0}{1-\nu z} g(z).$
Using \eqref{eq:2f1boundN} in turn gives the majorizing differential equation
\begin{equation}
h''(z) = \frac{N+1}{2} \left( \frac{\nu}{1-\nu z} \right) h'(z)
+ \frac{(N+1)N}{2} \left( \frac{\nu}{1-\nu z} \right)^2 h(z).
\label{eq:gdiffeq}
\end{equation}
The simple solution $h(z) = A (1-\nu z)^{-N}$
of \eqref{eq:gdiffeq} now provides the bound
$|f_{[k]}| \le h_{[k]} = A \binom{N+k}{k} \nu^k$
with $A$ chosen as in \eqref{eq:2f1boundA} for the initial values.
\end{proof}

Note that the bound
is a hypergeometric sequence, indeed $h(z) = A \, {}_1F_0(N,\nu z)$,
so the tail of the Taylor series
is easily bounded as in \cref{sect:tails}.

A different method to evaluate ${}_2F_1$ in the corner case
is used in Mathematica, Maxima, mpmath and perhaps
others~\cite{maxgosp,vogtgosp}. No error bounds have been published
for that method; we encourage further investigation.

\subsection{Numerical stability}

Unlike the confluent hypergeometric functions, cancellation is not a problem
for any $z$ as long as the parameters are small. With small complex parameters
and 64-bit precision, about 5-10 bits are lost on most of
the domain, up to about 25 bits in the black regions of \cref{fig:domains2f1},
so even machine precision interval arithmetic would be adequate for many
applications. However, cancellation does become a problem with large
parameters (e.g.\ for orthogonal polynomials of high degree).
The current implementation could be improved for some cases by
using argument transformations and recurrence relations to minimize
cancellation.

\subsection{Higher orders}

The functions ${}_3F_2$, ${}_4F_3$, etc.\
have a $1/z$ transformation analogous to \eqref{eq:2f1inv}, but no other
such formulas for generic parameters~\cite[16.8]{NIST:DLMF}.
We could cover $|z| \gg 1$ by evaluating ${}_pF_q$-series directly,
but other methods are needed on the annulus surrounding the unit circle.
Convergence accelerations schemes such
as~\cite{willis2012acceleration,bogolubsky2006fast,skorokhodov2005method}
can be effective, but the D-finite analytic continuation approach
(with the singular expansion at $z = 1$) is likely
better since effective error bounds are known and since
complexity-reduction techniques apply. A study remains to be done.

We note that an important special function related to ${}_{n+1}F_n$,
the polylogarithm $\operatorname{Li}_s(z)$, is implemented for general complex $s, z$
in Arb, using direct series
for $|z| \ll 1$ and the Hurwitz zeta function otherwise~\cite{Johansson2014hurwitz}.
Also $s$-derivatives are supported.

\section{Benchmarks}

We compare Arb (git version as of June 2016) to software for specific functions,
and to Mathematica 10.4 which supports arbitrary-precision evaluation of
generalized hypergeometric functions.
Tests were run on an Intel i5-4300U CPU
except Mathematica which was run on an Intel i7-2600 (about 20\% faster).
All timing measurements have been rounded to two digits.\footnote{Code for some of the benchmarks: \url{https://github.com/fredrik-johansson/hypgeom-paper/}}

\subsection{Note on Mathematica}

We compare to Mathematica since it generally appears
faster than Maple and mpmath.
It also attempts to track numerical errors,
using a form of significance arithmetic~\cite{Sofroniou2005precise}.
With \texttt{f=Hypergeometric1F1}, the commands \texttt{f[-1000,1,N[1,30]]},
\texttt{N[f[-1000,1,1]]}, and \texttt{N[f[-1000,1,1],30]}
give $0.155$, $0.154769$, and $0.154769339118406535633854462041$,
where all results show loss of significance being tracked correctly.
Note that \texttt{N[..., d]} attempts to produce~$d$ significant digits
by adaptively using a higher internal precision.
For comparison, at 64 and 128 bits, Arb produces
$[\pm 5.51 \cdot 10^6]$ and $[0.154769339118 \pm 9.35 \cdot 10^{-13}]$.

Unfortunately, Mathematica's heuristic error tracking is unreliable.
For example, \texttt{f[-1000,1,1.0]}, produces $-1.86254761018 \cdot 10^9$
without any hint that the result is wrong.
The input \texttt{1.0} designates a machine-precision
number, which in Mathematica is distinct from an arbitrary-precision number,
potentially disabling error tracking in favor of speed (one of many pitfalls that the user must
be aware of). However, even
arbitrary-precision results obtained from exact input cannot be trusted:

\begin{small}
\begin{verbatim}
In[1]:= a=8106479329266893*2^-53; b=1/2 + 3900231685776981*2^-52 I;
In[2]:= f=Hypergeometric2F1[1,a,2,b];
In[3]:= Print[N[f,16]]; Print[N[f,20]]; Print[N[f,30]];                                                                   
0.9326292893367381 + 0.4752101514062250 I
0.93263356923938324111 + 0.47520053858749520341 I
0.932633569241997940484079782656 + 0.475200538581622492469565624396 I
In[4]:= f=Log[-Re[Hypergeometric2F1[500I,-500I,-500+10(-500)I,3/4]]];
In[5]:= $MaxExtraPrecision=10^4; Print[{N[f,100],N[f,200],N[f,300]}//N];
{2697.18, 2697.18, 2003.57}
\end{verbatim}
\end{small}

As we see, even comparing answers computed at two levels
of precision (100 and 200 digits) for consistency is not reliable.
Like with most numerical software (this is not a critique of Mathematica
in particular), Mathematica users
must rely on external knowledge to divine whether results are likely to be correct.

\subsection{Double precision}

Pearson gives a list of 40 inputs $(a,b,z)$ for ${}_1F_1$
and 30 inputs $(a,b,c,z)$ for ${}_2F_1$, chosen to exercise different
evaluation regimes in IEEE 754 double precision
implementations~\cite{pearson2009computation,pearson2014numerical}.
We also test $U(a,b,z)$ with the ${}_1F_1$ inputs
and the Legendre function $Q_a^c(1-2z)$
with the ${}_2F_1$ inputs (with argument $1-2z$, it is equivalent to a linear
combination of two ${}_2F_1$'s of argument $z$).

\begin{table}
\renewcommand{\arraystretch}{1.4}
\setlength{\tabcolsep}{.4em}
\begin{center}
\begin{scriptsize}
\begin{tabular}{l l | r r | l}
          & Code & Average & Median   & Accuracy \\
\hline
${}_1F_1$ & SciPy         & 2.7 & 0.76         & 18 good, 4 fair, 4 poor, 5 wrong, 2 NaN, 7 skipped\\
${}_2F_1$ & SciPy         & 24   & 0.56        & 18 good, 1 fair, 1 poor, 3 wrong, 1 NaN, 6 skipped \\
\hline
${}_2F_1$ & Michel \& S.  & 7.7  & 2.1 & 22 good, 1 poor, 6 wrong, 1 NaN \\
\hline
${}_1F_1$ & MMA (m)       & 1100      &     29 & 34 good, 2 poor, 4 wrong, 2 no significant digits out \\
${}_2F_1$ & MMA (m)       & 30000     &     72   & 29 good, 1 fair \\

$U$       & MMA (m)       & 4400      & 190   &  28 good, 4 fair, 2 wrong, 6 no significant digits out \\
$Q$       & MMA (m)       & 4300      &   61  &  21 good, 3 fair, 2 poor, 1 wrong, 3 NaN \\
\hline
${}_1F_1$ & MMA (a)       & 2100      & 170   & 39 good, 1 not good as claimed (actual error $2^{-40}$) \\
${}_2F_1$ & MMA (a)       & 37000     & 540   & 30 good ($2^{-53}$) \\
$U$       & MMA (a)       & 25000     & 340   & 38 good, 2 not as claimed ($2^{-40}, 2^{-45}$) \\
$Q$       & MMA (a)       & 8300      & 780   & 28 good, 1 not as claimed ($2^{-25}$), 1 wrong \\
\hline
${}_1F_1$ & Arb           &  200      & 32    & 40 good (correct rounding) \\
${}_2F_1$ & Arb           &  930      & 160   & 30 good (correct rounding) \\
$U$       & Arb           & 2000      & 93    & 40 good (correct rounding) \\
$Q$       & Arb           & 3000      & 210   & 30 good ($2^{-53}$) \\
\end{tabular}
\end{scriptsize}
\caption{Time (in microseconds) per function evaluation, averaged over the 30 or 40 inputs, and resulting accuracy.
MMA is Mathematica, with (m) machine precision and (a) arbitrary-precision arithmetic.
With SciPy, Michel \& S. and MMA (m), we deem a result with relative error at most $2^{-40}$ good,
$2^{-20}$ fair, $2^{-1}$ poor, otherwise wrong. With MMA (a), good is $<2^{-50}$.}
\label{tab:pearson2}
\end{center}
\end{table}

With Arb, we measure the time to compute certified correctly rounded 53-bit
floating-point values, except for the function $Q$ where we compute
the values to a certified relative error of $2^{-53}$ before
rounding (in the current version,
some exact outputs for this function are not recognized,
so the correct-rounding loop would not terminate).
We interpret all inputs as double precision constants rather than
real or complex numbers that would need to be enclosed by intervals.
For example, $0.1 = 3602879701896397 \cdot 2^{-55}$, $e^{i \pi/3} = 2^{-1} + 3900231685776981 \cdot 2^{-52} i$.
This has no real impact on this particular benchmark, but it is
simpler and demonstrates Arb as a black box to implement floating-point functions.

For ${}_2F_1$, we compare with the double precision C++ implementation
by Michel and Stoitsov \cite{michel2008fast}.
We also compare with the Fortran-based ${}_1F_1$ and ${}_2F_1$ implementations in
SciPy~\cite{scipy}, which only support inputs with real parameters.

We test Mathematica in two ways: with machine precision numbers as input,
and using its arbitrary-precision arithmetic to attempt to get 16 significant digits.
\texttt{N[]} does not work well
on this benchmark, because Mathematica gets stuck expanding huge symbolic
expressions when some of the inputs are given exactly (\texttt{N[]}
also evaluates some of those resulting expressions incorrectly),
so we implemented a precision-increasing loop like the one used
with Arb instead.

Since some test cases take much longer than others,
we report both the average and the median time for a function
evaluation with each implementation.

The libraries using pure machine arithmetic are fast, but
output wrong values in several cases.
Mathematica with machine-precision input is only slightly more reliable.
With arbitrary-precision numbers, Mathematica's significance arithmetic
computes most values correctly, but the output is not always correct
to the 16 digits Mathematica claims. One case,
$Q_{500}^{500}(11/5)$, is completely wrong.
Mathematica's ${}_2F_1$ passes,
but as noted earlier, would fail with the built-in \texttt{N[]}.
Arb is comparable to Mathematica's machine precision in
median speed (around $10^4$ function evaluations per second),
and has better worst-case speed,
with none of the wrong results.

\subsection{Complex Bessel functions}

Kodama~\cite{Kodama2011} has implemented the functions $J_{\nu}(z), Y_{\nu}(z)$ and
$H^{(1),(2)}_{\nu}(z) = J_{\nu} \pm i Y_{\nu}(z)$ for complex $\nu, z$
accurately (but without formal error bounds) in Fortran.
Single, double and extended (targeting $\approx 70$-bit accuracy) precision are supported.
Kodama's self-test program evaluates all four functions
with parameter $\nu$ and $\nu + 1$ for 2401 pairs $\nu,z$ with real
and imaginary parts between $\pm 60$, making
19\,208 total function calls. Timings are shown in \cref{tab:kodama}.

\begin{table}
\renewcommand{\arraystretch}{1.1}
\setlength{\tabcolsep}{.4em}
\begin{center}
\begin{small}
%\begin{tabular}{c c c c c}
%Bits   & Kodama     & Mathematica   & Arb    & Arb (4-in-1)  \\ \hline
%53     & 4.7      & 271         & 10   & 5.1               \\
%113    & 275      & 343         & 11   & 5.9               \\
%\end{tabular}
%\begin{small}
\begin{tabular}{l | r r | l}
Code               & Time  & 4-in-1 &  Accuracy  \\ \hline
Kodama, 53-bit     & 4.7  &   &  19208 good ($\approx 2^{-45}$) according to self-test          \\
Kodama, $\approx$ 80-bit   & 270  &   &  19208 good ($\approx 2^{-68}$) according to self-test          \\ \hline
MMA (machine)            & 75  &   &  14519 good, 607 fair, 256 poor, 3826 wrong \\
MMA (\texttt{N[]}, 53-bit)           & 270   &  &  17484 good, 2 fair,   273 poor, 1449 wrong \\
MMA (\texttt{N[]}, 113-bit)           & 340  &   &  18128 good, 206 fair, 187 poor, 687 wrong \\ \hline
Arb (53-bit)       & 10  & 5.1  &  19208 good (correct rounding) \\
Arb (113-bit)      & 11  & 5.9  &  19208 good (correct rounding) \\
\end{tabular}
\end{small}
\caption{Time (in seconds) to compute the 19\,208 Bessel function values in Kodama's test.
The good/fair/poor/wrong thresholds are $2^{-40}$ (53-bit) or $2^{-100}$ (113-bit) / $2^{-20}$ / $2^{-1}$.}
\label{tab:kodama}
\end{center}
\end{table}

We compiled Kodama's code with GNU Fortran 4.8.4, which uses a quad precision
(113-bit or 34-digit) type for extended precision.
We use Arb to compute the functions on the same
inputs, to double and quad precision
with certified correct rounding of both the real and imaginary parts.
In the 4-in-1 column for Arb, we time
computing the four values $J, Y, H^{(1)}, H^{(2)}$ simultaneously rather than with
separate calls, still with correct rounding for each function.
Surprisingly, our implementation is competitive with Kodama's double
precision code,
despite ensuring correct rounding.
There is a very small penalty going from double to quad precision,
due to the fact that a working precision of 200-400 bits is used
in the first place for many of the inputs.

We also test Mathematica in three ways: with machine precision input,
and with exact input using \texttt{N[...,16]} or \texttt{N[...,34]}.
Mathematica computes many values incorrectly. For example,
with $\nu=-53.9-53.4i, z = -54.7+17.61i$, the three evaluations
of the \texttt{HankelH1} function give three different results
\begin{equation*}
\begin{array}{c}
H^{(1)}_{\nu}(z) = 1.30261 \cdot 10^{-34} - 4.49948 \cdot 10^{-35} i, \\
H^{(1)}_{\nu}(z) = -1.18418492459404 \cdot 10^{-32} + 2.805568990224272 \cdot 10^{-31} i, \\
H^{(1)}_{\nu}(z) = -3.893447525697409211107221229269630 \cdot 10^{-25} + \\
6.133044639987209608932345865755910 \cdot 10^{-25} i
\end{array}
\end{equation*}
while the correct value is about $1.65 \cdot 10^{-27} + 2.28 \cdot 10^{-27} i$.

\subsection{High precision}

\Cref{tab:prectimings} shows the time to compute Bessel functions
for small $\nu$ and $z$,
varying the precision and the type of the inputs.
In A, both $\nu$ and $z$ have few bits, and Arb uses BS.
In B, $\nu$ has few bits but $z$ has full precision, and RS is used.
In C, both $\nu$ and $z$ have full precision, and FME is used.
Since much of the time in C is spent computing $\Gamma(\pi+1)$,
D tests $J_{\pi}(\pi)$ without this factor.
E and F involve computing the parameter derivatives
of two ${}_0F_1$ functions with $\mathbb{C}[[x]] / \langle x^2 \rangle$
arithmetic to produce $K_3(z)$, respectively using BS and RS.

In all cases, complexity-reducing methods give a notable speedup at high precision.
Only real values are tested; complex numbers (when of similar
``difficulty'' on the function's domain)
usually increase running times uniformly by a factor 2--4.

\begin{table}
\renewcommand{\arraystretch}{1.4}
\setlength{\tabcolsep}{.4em}
\begin{center}
\begin{scriptsize}
\begin{tabular}{l | r r r r r | r r r r r}
   &  \multicolumn{5}{|c|}{Mathematica}  &  \multicolumn{5}{|c}{Arb}    \\
  & 10 & 100 & 1000 & 10\,000 & 100\,000 & 10 & 100 & 1000 & 10\,000 & 100\,000 \\ \hline
A: $J_3(3.25)$ &
24 & 48 & 320 & 9200 & 590\,000 &
6 & 27 & 140 & 1600 & 26\,000 \\
B: $J_3(\pi)$ &
23 & 56 & 800 & 110\,000 & 17\,000\,000 &
7 & 28 & 320 & 14\,000 & 960\,000 \\
C: $J_{\pi}(\pi)$ &
58 & 220 & 34\,000 & 8\,500\,000 &  &
12 & 91 & 2800 & 270\,000 & 44\,000\,000 \\
D: ${}_0F_1(\pi\!+\!1,-\tfrac{\pi^2}{4})\!$ &
26 & 84 & 1800 & 350\,000 & 61\,000\,000 &
7 & 49 & 1500 & 93\,000 & 14\,000\,000 \\
E: $K_3(3.25)$ &
68 & 160 & 1700 & 140\,000 & 19\,000\,000 &
40 & 150 & 1300 & 20\,000 & 500\,000 \\
F: $K_3(\pi)$ &
69 & 160 & 2600 & 350\,000 & 52\,000\,000 &
43 & 170 & 1900 & 67\,000 & 4\,100\,000 \\
\end{tabular}
\end{scriptsize}
\caption{Time (in microseconds) to compute easy values to
10, 100, 1000, 10\,000 and 100\,000 digits.
Note: in case C, Arb takes 360\,s the first time at $10^5$ digits, due to
precomputing Bernoulli numbers.}
\label{tab:prectimings}
\end{center}
\end{table}

\subsection{Large parameters and argument}

\Cref{tab:ultrabenchmark} compares global performance of
confluent hypergeometric functions.
Each function is evaluated at 61 exponentially spaced arguments
$z_k = \pi 10^{k/10}, 0 \le k \le 60$
for varying precision or parameter magnitudes,
covering the convergent, asymptotic and transition regimes.

We include timings for MPFR, which implements $J_N(z)$ for $N \in \mathbb{Z}, z \in \mathbb{R}$
in floating-point arithmetic with correct rounding.
At low precision, Arb computes $J_N(z)$ about 2-3 times slower than MPFR.
This factor is explained by Arb's lack of automatic compensation for precision loss,
the ${}_0F_1(b,z)$ series evaluation not being optimized specifically for $b \in \mathbb{Z}, z \in \mathbb{R}$,
and complex arithmetic being used for the asymptotic expansion,
all of which could be addressed in the future.

\Cref{fig:timings} shows a more detailed view of test cases A (MPFR, Arb) and G (Mathematica, Arb).
The evaluation time peaks in the transition region between the convergent and asymptotic series.
Approaching the peak, the time increases smoothly as more terms are required.
With Arb, sudden jumps are visible where the precision is doubled.
By tuning the implementation of $J_0(z)$, the jumps could be smoothed out.
In test case B (not plotted), there is no cancellation, and no such jumps occur.
Mathematica is slow with large parameters (we get more than a factor $10^5$ speedup),
likely because it is conservative about using the asymptotic expansion.
Since Arb computes a rigorous error bound, the asymptotic expansion
can safely be used aggressively.

\begin{table}
\renewcommand{\arraystretch}{1.4}
\setlength{\tabcolsep}{.4em}
\begin{center}
\begin{scriptsize}
\begin{tabular}{l | l l l l | l l l l}
                                  & \multicolumn{4}{|c|}{Mathematica} & \multicolumn{4}{|c}{Arb} \\
Function $\backslash$ $N$         &  10       & 100       & 1000 & 10000 &  10 & 100 & 1000 & 10000  \\ \hline
A: $J_0(z)$              &  0.0039   &  0.020    & 3.0     & 4700      &  0.00097   &  0.0064  &  0.12   &  7.7  \\
B: $I_0(z)$             &   0.0032   &  0.012    & 1.5     & 2000  &  0.00081   &  0.0042  &  0.067  &  3.4  \\
C: $J_0(\omega z)$      &   0.0099   &  0.072    & 16     & 36000      &  0.0014    &  0.0069  &  0.16   &  11   \\
D: $K_0(z)$             &   0.0037   &  0.028    & 4.2     &  5900  &  0.0018    &  0.020   &  0.47   &  28   \\
E: $J_N(z)$            &   0.0038   &  0.0073   & 0.31  &  46  &  0.0010    &  0.0054  &  0.13   &  2.7  \\
F: $J_{Ni}(z)$         &   0.0089   &  28       & $>10^5$     &   $>10^5$  &  0.0017    &  0.0096   &  0.14   &  10   \\
G: ${}_1F_1(Ni,1+i,\omega z)$     &   0.13     &  15       & 15000     &  $>10^5$   &  0.0082    &  0.061   &  1.3    &  82    \\ \hline
           &   \multicolumn{4}{|c|}{MPFR} & \multicolumn{4}{|c}{ } \\ \hline
A: $J_0(z)$   & 0.00057  &  0.0030   &  0.24  &  42 & & & & \\
E: $J_N(z)$   & 0.00078  &  0.0021   &  0.039 &  2.7 & & & & \\
\end{tabular}
\end{scriptsize}
\caption{Time in seconds to evaluate the functions on 61 exponentially spaced points from $\pi$ to $10^6 \pi$.
In A--D, the function is computed to $N$ digits, $N = 10, 10^2, 10^3, 10^4$.
In E--G, the function is computed to 10 digits, with $N$ or $Ni$ ($i = \sqrt{-1}$) as a parameter. Here, $\omega = e^{\pi i / 3}$.}
\label{tab:ultrabenchmark}
\end{center}
\end{table}

\begin{figure}[!htb]
\centering
\includegraphics[scale=0.59]{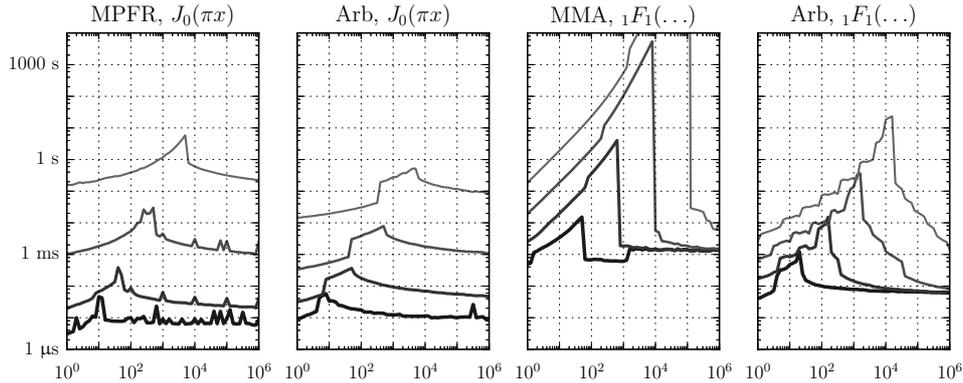}
\caption{Time as a function of $x$ to evaluate $J_0(\pi x)$ to $N$ digits and ${}_1F_1(Ni,1+i,e^{\pi i / 3} \pi x)$ to 10 digits,
for $N = 10^1, 10^2, 10^3, 10^4$ (from bottom to top).}
\label{fig:timings}
\end{figure}

\subsection{Airy function zeros}

Arb includes code for rigorously computing
all the zeros of a real analytic functions on a finite interval.
Sign tests and interval bisection are used
to find low-precision isolating intervals for all zeros,
and then as an optional stage rigorous Newton iteration
is used to refine the zeros to high precision.

Arb takes 0.42 s to isolate the 6710 zeros of
$\operatorname{Ai}(z)$ on the interval $[-1000,0]$
(performing 67\,630 function evaluations),
1.3 s to isolate and refine the zeros to 16-digit accuracy (181\,710 evaluations),
and 23 s to isolate and refine the zeros to 1000-digit accuracy
(221\,960 evaluations).
Mathematica's built-in \texttt{AiryAiZero} takes 2.4 s for
machine or 16-digit accuracy and 264 s for 1000-digit accuracy.

Note that the Arb zero-finding only uses interval evaluation
of $\operatorname{Ai}(z)$ and its derivatives; 
no a priori knowledge about the distribution of zeros
is exploited.

\subsection{Parameter derivatives}

We compute $\partial_{\nu}^n J_{\nu}(z) \vert_{\nu=1,z=\pi}$ to 100 digits,
showing timings in \cref{tab:paramderiv}.
Mathematica's \texttt{N[]} scales very poorly.
We also include timings with Maple 2016 (on the same machine as Mathematica),
using \texttt{fdiff} which implements numerical differentiation.
This performs better, but the automatic
differentiation in Arb is far superior.
In fact, Maple automatically uses parallel computation (8 cores), and its total CPU
time is several times higher than the wall time shown in~\cref{tab:paramderiv}.

\begin{table}[h!]
\renewcommand{\arraystretch}{1.1}
\setlength{\tabcolsep}{.4em}
\begin{center}
\begin{small}
\begin{tabular}{l | l l l l l l l}
Code \textbackslash $\;n$      & 1 & 2 & 5 & 10 & 100 & 1000 & 10000  \\ \hline
Mathematica     & 0.12     & 0.20    & 2600    &         &       &     &        \\
Maple (8 cores) & 0.024    & 0.039   & 4.3     & 35       & $>1$ h   &     &        \\
Arb             & 0.000088 & 0.00015 & 0.00031 & 0.00058 & 0.017 & 6.6 & 1400
\end{tabular}
\end{small}
\caption{Time (in seconds) to compute parameter $n$-th derivatives.}
\label{tab:paramderiv}
\end{center}
\end{table}

\section{Conclusion}

We have demonstrated that it is practical
to guarantee rigorous error bounds for arbitrary-precision evaluation
of a wide range of special functions, even with complex parameters.
Interval arithmetic is seen to work very well, and the methods
presented here could be exploited in other software.
The implementation in Arb is already being used in applications,
but it is a work in progress, and many details could still be optimized.
For example, recurrence relations and alternative evaluation formulas
could be incorporated to reduce cancellation,
and internal parameters such as the number of terms
and the working precision could be chosen more intelligently.
More fundamentally, we have not addressed the following important issues:

\begin{itemize}
\item Rigorous error bounds for asymptotic expansions
of the generalized hypergeometric function ${}_pF_q$ in cases
not covered by the $U$-function.
\item Efficient support for exponentially large parameter values (e.g.\ $|a| \gg 10^4$, say
in time polynomial in $\log |a|$ rather than in $|a|$), presumably
via asymptotic expansions with respect to the parameters.
\item Methods to compute tight bounds when given wide intervals as input,
apart from the simple cases where derivatives and functional
equations can be used.
\item Using machine arithmetic where possible to speed up low-precision evaluation.
\item Formal code verification, to eliminate bugs that
may slip past both human review and randomized testing.
\end{itemize}

%\section*{Acknowledgements}

\bibliographystyle{siamplain}
\bibliography{references.bib}

\end{document}